\newcommand{\Alg}{Alg}
\def\ket#1{| #1 \rangle}
\def\kb#1#2{|#1\rangle\!\langle #2 |}
\newtheorem{defn}{Definition}
\newtheorem{prop}{Proposition}
\newtheorem{thm}{Theorem}
\newtheorem{exa}{Example}
\newtheorem{lem}{Lemma}
\newtheorem{rmk}{Remark}
\newtheorem{cor}{Corollary}
\newcommand{\C}{{\mathbb C}}
\newcommand{\operp}{$\bigcirc$\kern-.91em{$\perp$}}
\DeclarePairedDelimiter\floor{\lfloor}{\rfloor}
\begin{document}
\title[Private Subsystems and Quasiorthogonal Algebras]{Private Quantum Subsystems and Quasiorthogonal Operator Algebras}
\author[J.Levick, T.Jochym-O'Connor, D.W.Kribs, R.Pereira, R.Laflamme]{Jeremy Levick$^{1,2}$, Tomas Jochym-O'Connor$^{3}$, David W. Kribs$^{1,3}$, Raymond Laflamme$^{3,4,5}$, Rajesh Pereira$^{1}$}

\address{$^1$Department of Mathematics \& Statistics, University of Guelph, Guelph, ON, Canada N1G 2W1}
\address{$^2$African Institute for Mathematical Sciences, Muizenberg, Cape Town, 7945, South Africa}
\address{$^3$Institute for Quantum Computing and Department of Physics \& Astronomy, University of Waterloo, Waterloo, ON, Canada N2L 3G1}
\address{$^4$Perimeter Institute for Theoretical Physics, Waterloo, ON, Canada N2L 2Y5}
\address{$^5$Canadian Institute for Advanced Research, Toronto, ON, Canada M5G 1Z8}

\begin{abstract} We generalize a recently discovered example of private quantum subsystem to find private subsystems for Abelian subgroups of the $n$-qubit Pauli group, which exist in the absence of private subspaces. In doing so, we also connect these quantum privacy investigations with the theory of quasiorthogonal operator algebras through the use of tools from group theory and operator theory.
\end{abstract}

\subjclass[2010]{47L90, 81P45, 81P94, 94A40}

\keywords{private quantum subsystem, private quantum channel, $n$-qubit Pauli group, completely positive map, quasiorthogonal operator algebras, conditional expectations.}

\maketitle

\section{Introduction}
A key goal of quantum information theory is to understand how concepts from classical information translate into the quantum setting. For example, in classical information theory, one finds the idea of the \emph{one-time pad}, a system that can be used to encode information sent through a channel, so long as the two parties share a private key between them. The generalization of this scheme to the quantum setting was established in \cite{ambainis,boykin};  initially called private quantum channels and now referred to as \emph{private quantum subsystems}. Developing a characterization of the special features of particular forms of completely positive maps is central to the study.

Subsequent developments on the subject over the past fifteen years have included advances in the theory of private shared reference frames \cite{bartlett1,bartlett2}, applications to quantum secret sharing \cite{cleve,crepeau}, bridges established with quantum error correction \cite{kretschmann}, and a first indication of connections with the theory of operator algebras \cite{church}. More recently, a significant step was made toward a general theory for private quantum subsystems in \cite{jochym,jochym2}, including algebraic conditions that characterize such quantum privacy. In addition, that work included the first example of a private subsystem for a quantum channel that exists in the absence of any subspace that is privatized by the channel. The example in question was a simple two-qubit dephasing channel, and it was initially surprising to discover the underlying structure of a private subsystem for the channel.

This article began with an effort to generalize the example from \cite{jochym,jochym2} more widely; in particular to obtain a large class of simple and easily implemented $n$-qubit quantum channels, which as a consequence of their simplicity would not privatize any subspace, but that nevertheless do privatize quantum information through more delicate subsystem encodings. In addition to building on previous work, one could imagine such a class of channels and encodings as being useful in quantum computing when deleting information swiftly is a desired outcome. As an outgrowth of our example-based effort, we also uncover an advance in the general theory of private quantum subsystems. Specifically, we show how finite-dimensional private quantum subsystems can be cast in the framework of {\it quasiorthogonal operator algebras}, which have been studied in quantum information for their intrinsic interest \cite{weiner,ohnopetz,ohno,petzkahn} and specific connection with identifying maximal collections of mutually unbiased bases \cite{band,pittenger,wocjan}. Combining the quasiorthogonal perspective with conditional expectation tools from operator algebra \cite{pereira} and a group-theoretic analysis \cite{rudin}, we then show how quantum channels defined by operators forming an Abelian subgroup of the Pauli group can privatize quantum information encoded into subsystems of $n$-qubit Hilbert space, even though they can never privatize qubits directly encoded into subspaces of the Hilbert space. This gives a full picture of privatizing quantum information for such subgroups.

This paper organized as follows. In the next section we present preliminary material on private quantum subsystems. This is followed by a discussion on  quasiorthogonal operator algebras and conditional expectations, and we indicate how private subsystems fit into the framework. In the next section we turn our attention to the structure of Abelian subgroups of the Pauli group. We conclude by explicitly showing how such subgroups can be used to privatize quantum information.


\section{Private Quantum Subsystems and Subspaces}

The motivating communication scenario for a private quantum code starts with two parties, Alice and Bob, who share a private classical key that Alice uses to inform Bob which of a set of unitary operators $\{ U_i \}$ she has used to encode her quantum state (density operator) $\rho \mapsto U_i \rho U_i^\dagger$. Bob can then decode and recover the state $\rho$ without disturbing it. The set of unitaries $\{U_i\}$ and the probability distribution $\{p_i\}$ that makes up the random key which determines the encoding unitary are shared publicly. Hence an eavesdropper Eve's description of the system is given by the random unitary channel $\Phi(\rho) = \sum_i p_i U_i \rho U_i^\dagger$. By carefully selecting the unitary operators and probabilities, the random unitary channel will provide Eve with no information about the input state.

More generally, quantum channels are given by completely positive trace-preserving maps $\Phi:M_n(\C)\rightarrow M_n(\C)$, and in the context of private communication one looks for sets $S$ of density operators $\rho$ that are mapped by $\Phi$ to the same state, $\Phi(\rho) = \rho_0$ for all $\rho\in S$. Analogous to quantum error correction, we focus on sets of states that are associated with underlying Hilbert space structure and hence form a true quantum code in the sense that superpositions of privatized states are also private. We define a qubit to be a two-dimensional subspace of~$\C^n$ whose observables are given by the Pauli matrices in that fixed subspace,
\begin{align*}
X = \begin{pmatrix}
0 & 1 \\
1 & 0
\end{pmatrix},  \ 
Y = \begin{pmatrix}
0 & -i \\
i & 0
\end{pmatrix},  \ 
Z = \begin{pmatrix}
1 & 0 \\
0 & -1
\end{pmatrix}.  
\end{align*}
Thus the most basic choice for a set of private states is for $S$ to be the set of density matrices whose non-zero action is entirely on a $k$-qubit, $2^k$-dimensional subspace of the Hilbert space $\C^n$; i.e., there exists a subspace $\mathcal{C}$ such that the operators in $S$, when written in any basis containing $\mathcal{C}$, has the block form
\[ \rho=\begin{pmatrix} \rho_{\mathcal{C}} & 0 \\ 0 & 0 \end{pmatrix}. \]
As a convenience we shall identify the full set of such operators with the set of operators $\mathcal{L}(\mathcal{C})$ that act on $\mathcal{C}$, which in this case is isomorphic to $\mathcal{L}(\C^{2k})$ and thus $\Phi$ privatizes $k$ qubits of information. Two of the simplest examples in the single qubit case are given by the complete depolarizing channel, which satisfies $\Phi(\rho)=\frac12 I$ and $I$ is the identity operator, and the spontaneous emission channel, which satisfies $\Phi(\rho)= \kb{0}{0}$ for all single qubit $\rho$ \cite{nielsen}. The complete depolarizing channel in particular is the random unitary channel implemented by the equally weighted Pauli operators $I,X,Y,Z$.

However, not all quantum channels admit a private subspace. In particular, as proved in \cite{jochym,jochym2}, any quantum channel $\Phi(\rho) = \sum_i A_i \rho A_i^\dagger$ whose (normal) Kraus operators $A_i$ are mutually commuting, cannot privatize a (multi-dimensional) subspace. Nevertheless, even in such instances it can still be possible to privatize quantum information. The following phase flip example was presented and discussed in detail in \cite{jochym,jochym2}: Let $\Phi:\mathcal{L}(\C^2\otimes\C^2)\rightarrow\mathcal{L}(\C^2\otimes\C^2)$ be the completely positive map on two-qubit space whose Kraus operators are $\{\frac{1}{2}II,\frac{1}{2}Z I, \frac{1}{2}I Z,\frac{1}{2} ZZ\}$, where we have suppressed tensor symbols so that $ZI$ is $Z\otimes I$, etc. These operators are all diagonal in the computational basis $\{ \ket{00}, \ket{01}, \ket{10}, \ket{11}   \}$, and form an orthonormal basis for the diagonal subalgebra of $M_4(\C)$ in the normalized trace (Hilbert-Schmidt) inner product. In particular they are mutually commuting and so the channel $\Phi$ has no private subspaces. However, the subalgebra spanned by $II, I X, YY, YZ$ is privatized by $\Phi$, and moreover, is isomorphic to the Pauli algebra on one qubit; thus the channel still privatizes a qubit:
\[
\Phi\Big(\frac{1}{4}( II + c_2 I X + c_3Y Y + c_4 YZ)\Big) = \frac{1}{4}I I.
\]

This example relies on the more subtle notion of quantum subsystem to privatize a qubit. A space $A$ (or $B$) is a {\it subsystem} of a Hilbert space $\mathcal{H}$ if there is a subspace $\mathcal{C}$ of $\mathcal{H}$ such that $\mathcal{C}$ admits a tensor decomposition as $\mathcal{C} = A \otimes B$.

\begin{defn}
Let $\Phi:\mathcal{L}(\mathcal{H})\rightarrow\mathcal{L}(\mathcal{H})$ be a channel and let $B$ be a subsystem of $\mathcal{H}$. Then $B$ is a \emph{private subsystem} for $\Phi$ if there is a $\rho_0 \in \mathcal{L}(\mathcal{H})$ and $\sigma_A \in \mathcal{L}(A)$ such that
\begin{eqnarray}\label{private_definition}
\Phi (\sigma_A \otimes \sigma_B) = \rho_0 \quad \forall \sigma_B \in \mathcal{L}(B).
\end{eqnarray}
\end{defn}

As shown in \cite{jochym,jochym2}, the above example channel privatizes a qubit subsystem in the sense of Eq.~(\ref{private_definition}), yet does not privatize a subspace. Moreover, the mapping is ``non-operator'' in the language of \cite{jochym,jochym2} in the sense that Eq.~(\ref{private_definition}) holds for a particular state~$\sigma_A \in \mathcal{L}(A)$ but not for every state on $A$. We seek to understand this example and generalize it, and we shall make use of conditional expectations and quasiorthogonality of two subalgebras, notions to which we now turn.

\section{Quasiorthogonal Subalgebras and Conditional Expectations}

We begin this section by reviewing basic notions from operator algebras and then we indicate how private subsystems can be cast in the quasiorthogonal subalgebra framework and discuss examples.

A {\it unital $*$-subalgebra} of the $N\times N$ complex matrices $M_N(\C)$ is a subset $\mathcal{A}\subseteq{M_N(\C)}$ that is closed under matrix addition, scalar multiplication, multiplication, and taking adjoints, and includes the identity matrix $I_N$, which we denote by $I$ when the dimension is clear. It follows from the representation theory of finite-dimensional C$^*$-algebras \cite{davidson}, that any unital $*$-subalgebra $\mathcal{A}$ of $M_N(\C)$ is unitarily equivalent to an algebra of the form
\begin{eqnarray}\label{algunitequiv}
\bigoplus_{i=1}^m I_{k_i}\otimes M_{q_i}(\C),
\end{eqnarray}
where $\sum_i k_iq_i = N$.


\begin{defn} Let $\mathcal{A}$ be a unital $*$-subalgebra of $M_N(\C)$. Then $\Phi_{\mathcal{A}}:M_N(\C)\rightarrow M_N(\C)$ is called the \emph{trace-preserving conditional expectation onto $\mathcal{A}$} when it is the unique completely positive trace-preserving map satisfying
\begin{enumerate}
\item $\Phi_{\mathcal{A}}(a)=a \quad \forall a\in\mathcal{A}$
\item $\Phi_{\mathcal{A}}(a_1xa_2)=a_1\Phi_{\mathcal{A}}(x)a_2 \quad \forall a_1,a_2\in\mathcal{A}, \quad \forall x\in M_N(\C)$
\item $\Phi_{\mathcal{A}}(x)\geq 0 \quad \forall x\geq 0$.
\end{enumerate}
\end{defn}

The trace-preserving conditional expectation onto a unital $*$-sub-algebra is unique as a linear map, and is always a completely positive map.
Any unital $*$-subalgebra induces an inner product, the \emph{left-regular trace inner product}, which is the Hilbert-Schmidt inner product in the induced left-regular trace given by the subalgebra $\mathcal{A}$. More explicitly, let $a\in\mathcal{A}$ and consider the left-regular representation of $a$, $L_a(x)=ax$ for $x\in M_N(\C)$. The left-regular trace of $a$ is the trace of the operator $L_a$; $tr_{\mathcal{A}}(a) := tr(L_a)$. This gives the left-regular trace inner product as $\langle a,b\rangle_{\mathcal{A}} = tr_{\mathcal{A}}(a^*b)$.

Denote by $\mathcal{A}^{\prime}$ the \emph{commutant} of $\mathcal{A}$; $\mathcal{A}^{\prime}=\{x\in M_N(\C): xa = ax \quad \forall a\in\mathcal{A} \}$. Importantly, the Kraus operators for $\Phi_{\mathcal{A}}$ must be an orthonormal basis for $\mathcal{A}^{\prime}$ in the left-regular inner product induced by $\mathcal{A}$. In addition, recall that two different choices of Kraus operators must be related by a partial isometry conjugation. For more on conditional expectations onto matrix algebras, including for proofs and further references, see \cite{pereira}.

For the phase flip example discussed in the previous section, the map $\Phi= \Phi_{\Delta_4}$ is the trace-preserving conditional expectation onto the $4\times 4$ diagonal subalgebra, $\Delta_4= \{ \sum_{i,j=0}^1 a_{ij} \kb{ij}{ij}: a_{ij}\in\C\}\cong \C\oplus\C\oplus\C\oplus\C$. Since $\Delta_4^{\prime}=\Delta_4$, the Kraus operators for $\Phi$ should be a basis for the diagonal matrices, and indeed they are.

From a geometric perspecitve, quasiorthogonal algebras arise through a natural broadening of the notion of orthogonality for algebras; namely, unital algebras $\mathcal{A}$, $\mathcal{B}$ satisfy quasiorthogonality if $tr(ab)=0$ whenever $tr(a)=0=tr(b)$ and $a\in\mathcal{A}$, $b\in\mathcal{B}$. The algebras are not orthogonal themselves as they both contain the identity operator, however the sets $\mathcal{A}\ominus \C I$ and $\mathcal{B}\ominus \C I$ are orthogonal in the Hilbert-Schmidt inner product $<\! a,b \!> = tr(b^* a)$. We state equivalent forms of this definition in our current notation.

\begin{defn} Two unital subalgebras $\mathcal{A}, \mathcal{B}\subseteq M_N(\C)$ are said to be \emph{quasiorthogonal} if any of the following equivalent conditions hold:
\begin{enumerate}
\item $tr\big((a-\frac{tr(a)}{N}I)(b-\frac{tr(b)}{N}I)\big)=0 \quad \forall a\in\mathcal{A}, \quad \forall b\in\mathcal{B}$
\item $\frac{1}{N}tr(ab)=\frac{1}{N}tr(a)\frac{1}{N}tr(b) \quad \forall a\in\mathcal{A}, \quad \forall b\in\mathcal{B}$
\item $\Phi_{\mathcal{A}}(b)=\frac{tr(b)}{N}I$ and $\Phi_{\mathcal{B}}(a)=\frac{tr(a)}{N}I \quad \forall a\in \mathcal{A}, \quad \forall b\in\mathcal{B}$
\item $\Phi_{\mathcal{A}}\circ \Phi_{\mathcal{B}}(\rho) = \Phi_{\mathcal{B}}\circ \Phi_{\mathcal{A}}(\rho)= \frac{tr(\rho)}{N}I$ for all $\rho$.
\end{enumerate}
\end{defn}

In particular, condition (3) says that the trace-preserving conditional expectation onto $\mathcal{A}$ privatizes an element of $\mathcal{B}$ (and vice-versa) precisely when the two subalgebras are quasiorthogonal to each other.

In the extremal case with $\mathcal{A} = M_N(\C)$ and $\mathcal{B} = \C I = \mathcal{A}^\prime$, $\Phi_\mathcal{B}$ is the complete depolarizing channel with Kraus operators given by a complete orthonormal set (in the Hilbert-Schmidt inner product) of operators, and $\Phi_\mathcal{B}(a) = \frac{tr(a)}{N}I$ for all $a\in \mathcal{A}$. As a simple subsystem example, consider the first qubit algebra $\mathcal{A}$ in two-qubit space generated by $\{II,XI,YI,ZI\}$. The commutant $\mathcal{A}^\prime = I_2\otimes M_2(\C)$ is generated by the orthonormal set $\{II,IX,IY,IZ\}$, which also act as Kraus operators (after normalizing) for the map $\Phi_\mathcal{A}$. Thus, $\mathcal{A}$ and $\mathcal{A}^\prime$ form a quasiorthogonal pair and in particular the second qubit is a private subsystem for the channel $\Phi_\mathcal{A}$. Not all quasiorthogonal algebra pairs arise through the commutant in this way though; indeed, as an example note that the phase flip example above is determined by the algebra pairing $\mathcal{A} = \Delta_4$ and $\mathcal{B}$ generated by $\{II, I X, YY, YZ\}$, which is unitarily equivalent to $I_2 \otimes M_2(\C)$.

This behaviour is not unique to qubit systems and can be generalized to multi-dimensional systems. The following example exhibits the behaviour of privatized subalgebras for elementary systems composed of qutrits, we leave its proof for Appendix~\ref{app:qutritProof}.

\begin{exa}
Let $X=\begin{pmatrix} 0 & 1 & 0 \\ 0 & 0 & 1 \\ 1 & 0 & 0\end{pmatrix}$ and $Z=\begin{pmatrix} 1 & 0 & 0 \\ 0 & \omega & 0\\ 0 & 0 & \omega^2\end{pmatrix}$, $\omega = e^{2i \pi/3}$, be the generalized Pauli operators for qutrits. Consider the channel on two qutrits given by
\begin{align*}
\Phi (\rho) = \dfrac{1}{9} \sum_{i,j=0}^2 (X^{2i} Z^i \otimes X^j Z^j) \rho (X^{2i} Z^i \otimes X^j Z^j)^\dagger.
\end{align*}
The Kraus operators for this channel are Abelian, and thus do not admit a private subspace. However, one may check that the subalgebra $\mathcal{A}$ generated by $X^2\otimes X, XZ^2\otimes Z$ is privatized by $\Phi$, and moreover that $\mathcal{A}$ is isomorphic as an algebra to the one-qutrit subalgebra generated by $I\otimes X$, $I\otimes Z$.
\label{ex:qutrit}
\end{exa}


As a next-step generalization of our motivating phase flip example, for the rest of the paper we consider conditional expectations onto algebras generated by Abelian subgroups of the $n$-qubit Pauli group. The subalgebras generated in this way will be Abelian, so by choosing a maximal Abelian subgroup of the $n$-qubit Pauli group that contains the initial subgroup, we obtain a maximal Abelian subalgebra generated by Pauli operators. The commutant of such an algebra will be itself, and so the Kraus operators for the trace-preserving conditional expectation will simply be the elements of the Abelian group, suitably weighted. Thus, having mutually commuting Kraus operators, there can be no private subspace for such a channel. However, if the Abelian subalgebra has a quasiorthogonal partner with a non-scalar component as in Eq.~(\ref{algunitequiv}), the channel will still have a private subsystem. This motivates us to learn more about the structure of maximal Abelian subgroups of the $n$-qubit Pauli group, $\mathcal{P}_n$, a topic to which we now turn.

\section{Commutation Relations of the Pauli Group}

Given $n\geq 1$, let $\mathcal{P}_n$ be the $n$-qubit Pauli group, which is the unitary subgroup of $M_{2^n}(\C)$ generated by all $n$-fold tensor products of the Pauli matrices $X,Y,Z$.

Note that the group-theoretic commutator  $[\sigma,\tau]=\sigma\tau\sigma^{-1}\tau^{-1}$ (as opposed to the more typically considered Lie algebra commutator) of any two $n$-qubit Pauli matrices is equal to $\pm I$. Also, $\left[\sigma,\tau\right] = \left[c_1\sigma,c_2\tau\right]$ for $\sigma,\tau \in \mathcal{P}_n$ and $c_1,c_2 \in \{\pm 1, \pm i\}$. Hence, the function $\chi: P_n\times P_n \rightarrow \C$ defined by $\chi([\sigma],[\tau])I=\left[\sigma,\tau\right]$ is well-defined on the central quotient $P_n = \mathcal{P}_n/\{\pm I, \pm iI \}$. Notice that $P_1$ is isomorphic to the Klein four-group $V = \{e, v_1,v_2,v_3\}$ where $e$ is the identity element, each $v_i$ is its own inverse, and $v_i v_j = v_k$. Similarly, $P_n \simeq V^n$, where $V^n$ is the (Abelian) direct product of $n$ copies of the Klein four-group with itself.

We recall that a {\it bicharacter} on a group $G$ is a function $B(\cdot,\cdot): G\times G \rightarrow \C$ satisfying the following:
\begin{enumerate}
\item $B(e,g)=B(g,e)= 1$ for all $g \in G$,
\item $B(g,hk)=B(g,h)B(g,k)$ and $B(hk,g)=B(h,g)B(k,g)$ for all $g,h,k \in G$.
\end{enumerate}
Moreover, a bicharacter is non-degenerate if for all non-identity $g\in G$ there exists some element $h \in G$ such that $B(g,h)\neq 1$.

Notice that a bicharacter, when restricted in either argument to a fixed $g\in G$ becomes a character on $G$.

\begin{prop} The function $\chi(\cdot,\cdot)$ defined above is a non-degenerate bicharacter on $P_n$.
\end{prop}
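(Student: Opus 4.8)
The plan is to verify the three defining features of a bicharacter — normalization, bimultiplicativity in each argument, and non-degeneracy — directly from the group-commutator calculus, exploiting the one structural fact already recorded above: every commutator $[\sigma,\tau]$ equals $\pm I$ and is therefore \emph{central} in $\mathcal{P}_n$. Well-definedness of $\chi$ on the quotient $P_n$ has already been argued, and since each $[\sigma,\tau]=\pm I$ the map takes values in $\{\pm 1\}$; so it remains only to treat the multiplicative structure and the non-degeneracy.

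Normalization is immediate from $[I,\tau]=I\tau I\tau^{-1}=I$, giving $\chi(e,[\tau])=1$, and symmetrically $\chi([\sigma],e)=1$. For bimultiplicativity I would invoke the standard commutator identity
\[
[\sigma\sigma',\tau] \;=\; \sigma\,[\sigma',\tau]\,\sigma^{-1}\,[\sigma,\tau].
\]
Since $[\sigma',\tau]=\pm I$ is central, the conjugation $\sigma[\sigma',\tau]\sigma^{-1}$ leaves it fixed, so $[\sigma\sigma',\tau]=[\sigma',\tau][\sigma,\tau]$; comparing scalars (the two $\pm I$ factors commute) yields $\chi([\sigma\sigma'],[\tau])=\chi([\sigma],[\tau])\,\chi([\sigma'],[\tau])$. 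Because the quotient map sends $[\sigma\sigma']$ to $[\sigma][\sigma']$, this is exactly $\chi(gh,k)=\chi(g,k)\chi(h,k)$. The companion identity $[\sigma,\tau\tau']=[\sigma,\tau]\,\tau[\sigma,\tau']\tau^{-1}$ settles the second argument in the same way, and the observation that each one-variable restriction is then a character on $P_n$ is automatic.

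For non-degeneracy, take a non-identity $g=[\sigma]\in P_n$, so $\sigma$ is not a scalar multiple of $I$. Writing $\sigma$ (up to phase) as $\sigma_1\otimes\cdots\otimes\sigma_n$ with $\sigma_j\in\{I,X,Y,Z\}$, non-triviality forces $\sigma_j\in\{X,Y,Z\}$ for some coordinate $j$. Choosing any single-qubit Pauli $P\in\{X,Y,Z\}$ with $P\neq\sigma_j$, we have the single-qubit anticommutation $P\sigma_j=-\sigma_j P$, so setting $\tau=I\otimes\cdots\otimes P\otimes\cdots\otimes I$ with $P$ in slot $j$, every tensor factor commutes except the $j$-th and hence $\sigma\tau=-\tau\sigma$. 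Then $[\sigma,\tau]=\sigma\tau\sigma^{-1}\tau^{-1}=-\tau\sigma\sigma^{-1}\tau^{-1}=-I$, i.e.\ $\chi(g,[\tau])=-1\neq 1$, exhibiting the required partner for every non-identity $g$.

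The only step carrying genuine content is non-degeneracy, and even there the work is merely the concrete construction of an anticommuting partner; the bicharacter axioms collapse to formal commutator manipulation once centrality of $[\sigma,\tau]$ is used. A more conceptual alternative would identify $P_n\simeq V^n$ with $\F_2^{2n}$ through the usual binary symplectic encoding of Pauli operators, under which $\chi(g,h)=(-1)^{\langle g,h\rangle}$ for the standard symplectic form; bimultiplicativity would then be bilinearity of the form and non-degeneracy its non-degeneracy. I would expect the commutator-based argument to be cleaner to present here, since it relies only on facts already assembled in the text.
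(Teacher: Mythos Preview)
Your proof is correct and follows essentially the same approach as the paper: verify normalization directly, establish bimultiplicativity by exploiting that the group commutator is a central scalar (the paper writes $\sigma\tau\rho=\chi([\sigma\tau],[\rho])\rho\sigma\tau$ and moves $\rho$ past $\sigma$ and $\tau$ separately, while you invoke the equivalent identity $[\sigma\sigma',\tau]=\sigma[\sigma',\tau]\sigma^{-1}[\sigma,\tau]$), and reduce non-degeneracy to the existence of an anticommuting Pauli. Your treatment of non-degeneracy is in fact more explicit than the paper's, which simply asserts the anticommuting partner is ``easily seen'' to exist; your tensor-factor construction supplies that missing detail.
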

\begin{proof} Firstly, $\chi([I],[\sigma])= \left[ I,\sigma\right] = I$, hence $\chi([I],[\sigma])=1$ and similarly $\chi([\sigma],[I])= 1$.
To prove the second condition holds, note that $\sigma\rho = \chi([\sigma],[\rho])\rho\sigma$ and $\tau\rho = \chi([\tau],[\rho])\tau\sigma$. Hence
\begin{eqnarray*}
\sigma\tau\rho &=& \chi([\sigma\tau],[\rho])\rho(\sigma\tau)\\ &=& \chi([\tau],[\rho])\sigma\rho\tau\\ &=& \chi([\tau],[\rho])\chi([\sigma],[\rho])\rho\sigma\tau,
\end{eqnarray*}
and the other identity is similarly proved. The claim that $\chi(\cdot,\cdot)$ is non-degenerate is equivalent to the claim that for any non-identity $[\sigma]\in P_n$, there is some $[\tau]\in P_n$ that anti-commutes with $[\sigma]$, which is easily seen to be the case.
\end{proof}

Consider the character matrix for $P_1$ given by
\[
H = \begin{pmatrix} 1 & 1 & 1 & 1 \\ 1 & 1 & -1 & -1 \\ 1 & -1 & 1  & -1 \\ 1 & -1 & -1 & 1\end{pmatrix},
\]
where the columns and rows index the Pauli matrices and the entries of $H$ record whether elements of the Pauli basis commute or anti-commute. Then $H$ is the so-called bicharacter matrix of $\chi(\cdot,\cdot)$ on $P_1$; that is $H_{\sigma,\tau} = \chi([\sigma],[\tau])$. Moreover, since the restriction of $\chi([\sigma],[\tau])$ to any particular $\sigma$ yields a character of the Abelian group $P_1$, $H$ is a character matrix.

As noted above, the group $P_1$ is isomorphic to the Klein four-group, $V = \{e,x,y,z\}$, an Abelian group defined by the relations $x^2=y^2=z^2 = e$ and $xy=z$, $xz=y$, $yz=x$. Thus, $P_n$ is isomorphic to $V^n$, the direct product of $V$ with itself $n$ times, and $H_n := H^{\otimes n}$ is a character matrix for the Abelian group $P_n$, and records the commutation relations between basis elements for the $n$-qubit Pauli group. Lastly, we point out that a maximal Abelian subgroup of $P_n$ corresponds to a maximal submatrix of $H^{\otimes n}$ containing all $1$'s.

We recall another notion from group theory to continue. Let $G$ be a group with dual group $\widehat{G}$ and $K$ a subgroup of $G$. Then the {\it annihilator} of $K$ in $G$, $Ann_G(K)$ is the set of characters $\chi_i \in \hat{G}$ satisfying $\chi_i(k) =1$ for all $k\in K$. The following is a well-known fact about the annihilator subgroup \cite{rudin}.

\begin{lem}
Let $G,K,\widehat{G}$ be as above; then $Ann_{G}(K)$ is isomorphically homeomorphic to the dual group of $G/K$.
\end{lem}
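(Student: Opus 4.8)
The plan is to exploit the universal property of the quotient homomorphism $\pi\colon G\to G/K$ in order to set up an explicit correspondence between the characters of $G/K$ and those characters of $G$ that are trivial on $K$. Note first that, although the notation reads $Ann_G(K)$, by the definition given this set lives inside $\widehat{G}$: it consists of the characters $\chi\in\widehat{G}$ with $\chi(k)=1$ for all $k\in K$. So what must be produced is a (topological) group isomorphism between this subgroup of $\widehat{G}$ and the dual group $\widehat{G/K}$.

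First I would define the candidate isomorphism. For $\psi\in\widehat{G/K}$, set $\Psi(\psi)=\psi\circ\pi$. Since $\pi(k)=eK$ for every $k\in K$, we have $(\psi\circ\pi)(k)=\psi(eK)=1$, so $\Psi(\psi)$ indeed annihilates $K$, and hence $\Psi$ maps $\widehat{G/K}$ into $Ann_G(K)$. That $\Psi$ is a group homomorphism is immediate from the pointwise definition of the product of characters, since $(\psi_1\psi_2)\circ\pi=(\psi_1\circ\pi)(\psi_2\circ\pi)$.

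Next I would verify bijectivity. Injectivity follows because $\pi$ is surjective: if $\psi\circ\pi$ is the trivial character on $G$, then $\psi$ is already trivial on all of $G/K=\pi(G)$. For surjectivity, take any $\chi\in Ann_G(K)$; because $\chi$ equals $1$ on $K$ and is multiplicative, it is constant on every coset $gK$, so it descends to a well-defined function $\overline{\chi}\colon G/K\to\C$ via $\overline{\chi}(gK)=\chi(g)$. The character axioms for $\overline{\chi}$ are inherited directly from those for $\chi$, and by construction $\Psi(\overline{\chi})=\chi$. This establishes the algebraic isomorphism $Ann_G(K)\cong\widehat{G/K}$.

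Finally, the word \emph{homeomorphic} in the statement requires checking that $\Psi$ and its inverse are continuous for the compact-open topologies on the two dual groups. This topological matching of neighbourhoods across the continuous open surjection $\pi$ is the only genuinely non-routine step, and it is precisely the point handled in the locally compact abelian setting of \cite{rudin}. For the groups relevant to this paper, however, namely $G=P_n\simeq V^n$ together with its subgroups, all the groups involved are finite and carry the discrete topology, so every map is automatically continuous and the homeomorphism assertion holds trivially. Thus the algebraic argument above already suffices for our purposes.
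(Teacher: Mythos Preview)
Your argument is correct and is the standard proof: pull back characters along the quotient map $\pi$, check this lands in the annihilator, and observe that the inverse is the induced map on the quotient, with the topological claim reducing to the discrete case for finite groups. There is nothing to compare against, however, because the paper does not prove this lemma at all; it simply records it as a well-known fact and cites \cite{rudin} for the proof. So you have supplied precisely the argument the paper chose to omit.
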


In particular, this implies that when $G$ is finite and Abelian, that $|Ann_G(K)|=|G/K|=|G|/|K|$. The following is a direct consequence of this statement.

\begin{cor}\label{aacom}
Let $G=P_n$ and $K$ be any subgroup of $P_n$. Let $I$ be column indices of $H^{\otimes n}$ associated to the elements of $K$. Let $J=\chi^{-1}(Ann_G(K))$ be the row indices associated to the annihilator of $K$. Then we have $|I||J|=4^n$.
\end{cor}

\begin{cor}
Let $K$ be an Abelian subgroup of $P_n$ with $|K|=2^k < 2^n$. Then $K$ can be extended to an Abelian subgroup of size $2^n$.
\end{cor}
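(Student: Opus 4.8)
The plan is to reinterpret the hypothesis in terms of the bicharacter $\chi$ and then enlarge $K$ one element at a time. Since $P_n \simeq V^n$ is an elementary abelian $2$-group in which every non-identity element is an involution, and since $\chi$ records exactly the commutation data, an Abelian subgroup of $\mathcal{P}_n$ is the same thing as a subgroup $K \subseteq P_n$ that is \emph{self-orthogonal} for $\chi$, i.e. $\chi([\sigma],[\tau]) = 1$ for all $[\sigma],[\tau]\in K$. Writing
\[
K^{\perp} = \{ [\sigma]\in P_n : \chi([\sigma],[\tau]) = 1 \ \ \forall\, [\tau]\in K \},
\]
the non-degeneracy of $\chi$ established above makes $[\sigma]\mapsto\chi([\sigma],\cdot)$ an isomorphism $P_n\to\widehat{P_n}$, under which $K^{\perp}$ is precisely $\chi^{-1}(Ann_{P_n}(K))$; the self-orthogonality of $K$ is then just the inclusion $K\subseteq K^{\perp}$.

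First I would extract the relevant count from Corollary~\ref{aacom}, which gives $|K|\,|K^{\perp}| = 4^n$, hence $|K^{\perp}| = 2^{2n-k}$. Whenever $k<n$ we have $2n-k > n > k$, so $|K^{\perp}| > |K|$ and the inclusion $K\subseteq K^{\perp}$ is strict. I would then pick any $[\sigma]\in K^{\perp}\setminus K$ and set $K' = \langle K,[\sigma]\rangle$. By construction $[\sigma]$ commutes with every element of $K$, and $\chi([\sigma],[\sigma])=1$ because any Pauli element commutes with itself; a short check of the cases $\chi([k_1],[k_2])$, $\chi([k_1],[k_2\sigma])$, $\chi([k_1\sigma],[k_2\sigma])$ then shows that $K'$ is again self-orthogonal, i.e. Abelian. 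Since $[\sigma]^2 = e$ and $[\sigma]\notin K$, we get $K' = K \sqcup [\sigma]K$ with $|K'| = 2^{k+1}$.

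Iterating produces a strictly increasing chain $K = K_0 \subsetneq K_1 \subsetneq \cdots$ of Abelian subgroups with $|K_j| = 2^{k+j}$; by the same counting, $K_j^{\perp}$ strictly contains $K_j$ precisely while $k+j < n$, so the chain can always be continued until it reaches size $2^n$, at which point $K_{n-k}^{\perp} = K_{n-k}$ and the construction halts. This $K_{n-k}$ is the desired Abelian subgroup of size $2^n$ containing $K$. I do not expect a genuine obstacle here: the only substantive ingredient is the identification $\chi^{-1}(Ann_{P_n}(K)) = K^{\perp}$ together with the size formula of Corollary~\ref{aacom}, and once those are in place the extension step is forced by the elementary abelian $2$-group structure. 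The one point to state carefully is that $\chi$ is symmetric, so that $[\sigma]\in K^{\perp}$ really does guarantee $[\sigma]$ commutes with all of $K$; this follows from $\chi([\sigma],[\tau])\chi([\tau],[\sigma]) = 1$ together with $\chi$ taking values in $\{\pm1\}$.
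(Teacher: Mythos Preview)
Your proof is correct and follows essentially the same approach as the paper. The paper phrases the argument in terms of the bicharacter matrix $H_n = H^{\otimes n}$ and its all-ones submatrices, while you phrase it in terms of the orthogonal complement $K^{\perp}$ with respect to $\chi$; but the content is identical: both use Corollary~\ref{aacom} to get $|K^{\perp}| = 4^n/|K| > |K|$ when $k<n$, pick an element of $K^{\perp}\setminus K$, observe the enlarged group is still Abelian, and iterate.
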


\begin{proof} We will show that so long as $k<n$, there exists a $g\in P_n\setminus K$ such that $g$ commutes with every element of $K$; and so that $<K,g>$, the group spanned by $K$ and $g$, must be Abelian.

We use the previous corollary as follows: let $I$ be the indices associated to the elements of $K$; since $K$ is Abelian the submatrix $H_n[I]$ is an all $1$ submatrix. Thus, by the previous result, there exist $4^n |I|^{-1}= 4^n 2^{-k}$ rows whose intersection with each column in $I$ contains only $1$'s. As $k<n$, we have $2^k < \frac{4^n}{2^k}$, and so there is at least one row not already in $I$ with this property. Call this row $i$. By the symmetry of $H_n$, the intersection of column $i$ with the rows of $I$ is all $1$'s, and $H_{n_{ii}=1}$, and so the submatrix indexed by $I\cup\{i\}$ is an all $1$ submatrix. Let $g$ be the element of $P_n$ associated with the $i^{th}$ column; then $g\in P_n$ is an element of $P_n\setminus K$ that commutes with every element of $K$. Hence $<K,g>$ is an Abelian group of size at least $2^{k+1}$.

We can iterate the procedure until $\frac{4^n}{2^k}=2^k$, or $k=n$.
\end{proof}

We are now ready to prove the following theorem which will be useful in constructing examples that generalize our motivating phase flip example.

\begin{thm}
Any maximal Abelian subalgebra $\mathcal{A}$ of $M_{2^n}(\C) $ generated by elements of $\mathcal{P}_n$ has dimension $2^n$.
\end{thm}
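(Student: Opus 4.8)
The plan is to reduce the statement about the subalgebra $\mathcal{A}$ to the corresponding statement about Abelian subgroups of $P_n$, and then invoke the preceding corollary. First I would observe that since $\mathcal{A}$ is generated by a set $S$ of Pauli operators and $\mathcal{A}$ is Abelian, the elements of $S$ pairwise commute as matrices. Because two Pauli operators commute as matrices precisely when their group-theoretic commutator is $I$, i.e.\ when $\chi([\sigma],[\tau])=1$, the image $K$ of the group $\langle S\rangle\subseteq\mathcal{P}_n$ under the quotient map $\mathcal{P}_n\to P_n$ is an Abelian subgroup of $P_n$.

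Next I would show that $\dim\mathcal{A}=|K|$. The key point is that any product of elements of $S$ is again an element of $\mathcal{P}_n$, hence a Pauli operator up to a phase in $\{\pm1,\pm i\}$, and its class in $P_n$ lies in $K$. Consequently $\mathcal{A}$, being the complex span of all such products, equals the complex span of a single chosen lift of each of the $|K|$ classes in $K$; the phase ambiguity is absorbed by taking complex combinations, so this span is already closed under multiplication and under taking adjoints (each Pauli operator is Hermitian), and it contains $I$. Since the $4^n$ distinct Pauli operators form an orthonormal basis of $M_{2^n}(\C)$ in the Hilbert--Schmidt inner product, these $|K|$ lifts are linearly independent, giving $\dim\mathcal{A}=|K|$.

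Finally I would establish that maximality of $\mathcal{A}$ forces $K$ to be a maximal Abelian subgroup of $P_n$: if $K$ could be properly extended to a larger Abelian subgroup $K'\subseteq P_n$, then lifting $K'$ to commuting Pauli operators would generate an Abelian Pauli subalgebra $\mathcal{A}'\supsetneq\mathcal{A}$ with $\dim\mathcal{A}'=|K'|>|K|$, contradicting the maximality of $\mathcal{A}$. By the preceding corollary the extension procedure terminates only at $k=n$, so every maximal Abelian subgroup of $P_n$ has size $2^n$; therefore $\dim\mathcal{A}=|K|=2^n$. The step I expect to require the most care is the identity $\dim\mathcal{A}=|K|$: one must confirm that the span of the lifts is genuinely closed under the algebra operations so that no further linearly independent elements are produced by multiplication, and that the phase factors do not inflate the count. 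Both points follow from the group structure (products stay in $K$ up to phase) together with the orthonormality of the Pauli basis, but they must be stated precisely to make the dimension count airtight.
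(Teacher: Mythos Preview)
Your proposal is correct and follows essentially the same route as the paper: pass from the generating Pauli operators to their image $K$ in the central quotient $P_n$, argue that maximality of $\mathcal{A}$ forces $K$ to be maximal (in the paper's sense of having trivial bicharacter on $K\times K$), and then invoke the preceding corollary to conclude $|K|=2^n$. Your write-up is in fact more explicit than the paper's at the two points the paper skims over---the identity $\dim\mathcal{A}=|K|$ via Hilbert--Schmidt orthonormality of the Pauli basis, and the transfer of maximality from $\mathcal{A}$ to $K$---so nothing is missing.
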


\begin{proof} To begin, we note that if $\sigma \in \mathcal{P}_n$ is an element of $\mathcal{A}$, then so are $-\sigma, \pm i \sigma$, and hence independent generators of $\mathcal{A}$ must all come from different conjugacy classes. Thus, we may regard our generators as coming from $P_n$, rather than $\mathcal{P}_n$.
It follows that a generating set for a maximal Abelian subalgebra is simply a maximal Abelian subgroup $K \leq P_n$, which by the previous corollary has size $2^n$. Since such a subgroup is already closed under matrix multiplication, it follows that $\dim <K> = 2^n$.
\end{proof}

\begin{cor}\label{cor:aaprime}
Any subalgebra $\mathcal{A}$ contained in the span of elements from $P_n$ has the property that $|\mathcal{A}||\mathcal{A}^{\prime}|=4^n$.
\end{cor}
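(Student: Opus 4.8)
The plan is to reduce the statement to a purely group-theoretic counting identity and then apply Corollary~\ref{aacom}. Since the classes in $P_n$ form an orthogonal basis of $M_{2^n}(\C)$ in the Hilbert--Schmidt inner product, they are linearly independent, so the hypothesis that $\mathcal{A}$ sits in their span means precisely that $\mathcal{A}$ is spanned by the Pauli operators it contains, and $\dim\mathcal{A}$ equals the number of such operators. First I would upgrade this spanning set to a subgroup: if $\sigma,\tau\in\mathcal{A}$ are Pauli operators, then $\sigma\tau$ is a scalar multiple of a single Pauli operator, and closure of $\mathcal{A}$ under multiplication together with linear independence of the Pauli basis forces that operator to lie in $\mathcal{A}$ as well. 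Because $\sigma^2=\pm I$, the identity and inverses are present, so the classes of these operators form a genuine subgroup $K\leq P_n$ with $\mathcal{A}=\mathrm{span}(K)$ and $\dim\mathcal{A}=|K|$.

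Next I would compute the commutant directly in the Pauli basis. Writing an arbitrary $x=\sum_{\tau}c_\tau\tau$ and using $\sigma\tau=\chi([\sigma],[\tau])\,\tau\sigma$, conjugation by $\sigma\in K$ scales each coefficient $c_\tau$ by $\chi([\sigma],[\tau])$. Hence $x$ commutes with every element of $\mathcal{A}$ (equivalently, with every generator in $K$) exactly when $c_\tau=0$ for each $\tau$ anticommuting with some element of $K$. Therefore $\mathcal{A}'$ is precisely the span of the centralizer $Z(K)=\{[\tau]:\chi([\sigma],[\tau])=1 \ \forall [\sigma]\in K\}$, giving $\dim\mathcal{A}'=|Z(K)|$.

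Finally I would identify $Z(K)$ with the index set counted in Corollary~\ref{aacom}. Under the self-duality of $P_n$ supplied by the non-degenerate bicharacter $\chi$, a character $\chi(\cdot,[\tau])$ is trivial on $K$ if and only if $[\tau]\in Z(K)$; thus $Z(K)=\chi^{-1}(\mathrm{Ann}_{P_n}(K))$ is exactly the row-index set $J$, while $K$ is the column-index set $I$. Corollary~\ref{aacom} then yields $|I||J|=4^n$, that is $\dim\mathcal{A}\cdot\dim\mathcal{A}'=|K|\,|Z(K)|=4^n$, which is the claim.

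The step I expect to be the main obstacle is the first one: carefully justifying that a subalgebra spanned by Pauli operators really corresponds to a \emph{subgroup} of $P_n$ rather than merely a multiplicatively closed subset, and ruling out any phase or linear-independence subtlety that could break the clean identity $\dim\mathcal{A}=|K|$. Once $\mathcal{A}=\mathrm{span}(K)$ is established, the commutant computation is a routine coefficient comparison and the appeal to Corollary~\ref{aacom} is immediate.
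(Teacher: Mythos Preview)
Your argument is correct and is exactly the reasoning the paper leaves implicit: the corollary is stated without proof, immediately after Corollary~\ref{aacom}, and your three steps (identify $\mathcal{A}$ with $\mathrm{span}(K)$ for a subgroup $K\leq P_n$, compute $\mathcal{A}'$ as the span of the centralizer $Z(K)$ via the Pauli basis expansion, then invoke $|K||Z(K)|=4^n$ from Corollary~\ref{aacom}) are precisely the intended route. The worry you flag about Step~1 is not a real obstacle: orthogonality of the Pauli classes in the Hilbert--Schmidt inner product gives linear independence, and since every non-identity element of $P_n$ has order~$2$, closure under products already forces a subgroup, so $\dim\mathcal{A}=|K|$ holds cleanly.
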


Now we may turn our attention to the private structure of quantum channels whose normalized Kraus operators form an Abelian subgroup of the Pauli group.

\section{Privatizing Qubits with Abelian Subgroups}

We begin by discussing the simplest case, the channel whose Kraus operators are a normalized basis for the diagonal algebra, on arbitrary $n$-qubit space and what algebras can be privatized.

Let $\Delta_{2^n}$ be the diagonal algebra on $2^n\times 2^n$ complex matrices, with basis the group $\Delta$ generated by $\{Z_1,Z_2,\cdots,Z_n\}$ where $Z_1 = Z\otimes I \otimes \ldots \otimes I$, etc. Then $\Delta_{2^n}^{\prime}=\Delta_{2^n}$, and thus the conditional expectation $\Phi_{\Delta_{2^n}}$ onto $\Delta_{2^n}$,  has as its Kraus operators the elements of $\Delta$, multiplied by $\frac{1}{2^{n/2}}$.

Of course, we now know the answer is any algebra quasiorthogonal to the diagonal algebra $\Delta_{2^n}$. We can give an abstract description of such algebras, and also a concrete construction. The abstract result follows from the following result of \cite{pereira2} and provides an upper bound on the number of privatized qubits.

\begin{lem} Let $\mathcal{A}$ be a subalgebra of $M_N(\C)$ unitarily equivalent to $\bigoplus_{k=1}^m I_{i_k}\otimes M_{j_k}(\C)$. Then the following conditions are equivalent:
\begin{enumerate}
\item $\mathcal{A}$ is quasiorthogonal to $\Delta_N$;
\item $i_k \geq j_k$ for all $1\leq k \leq m$.
\end{enumerate}
\end{lem}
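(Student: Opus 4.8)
The plan is to translate quasiorthogonality into a concrete condition on the basis vectors defining $\Delta_N$ and then read off $i_k\ge j_k$ as a Schmidt-rank constraint. Write $\mathcal{A}\cong\bigoplus_{k=1}^m I_{i_k}\otimes M_{j_k}(\C)$ acting on $\mathcal{H}=\bigoplus_k\mathcal{H}_k$ with $\mathcal{H}_k=\C^{i_k}\otimes\C^{j_k}$ and $N=\sum_k i_kj_k$, and let $\{\ket{p}\}_{p=1}^N$ be the orthonormal basis defining $\Delta_N$. Using condition (3) of the quasiorthogonality definition together with the fact that $\Phi_{\Delta_N}$ is the map deleting off-diagonal entries, $\mathcal{A}$ is quasiorthogonal to $\Delta_N$ exactly when $\bra{p}a\ket{p}=\frac{\mathrm{tr}(a)}{N}$ for every $p$ and every $a\in\mathcal{A}$; that is, each $\ket{p}$ is a tracial vector for $\mathcal{A}$. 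Decomposing $\ket{p}=\sum_k\ket{p_k}$ with $\ket{p_k}\in\mathcal{H}_k$ and testing against $a=I_{i_k}\otimes m$ supported in a single block, this is equivalent to the local conditions $\rho_k^{(p)}:=\mathrm{tr}_{\C^{i_k}}\ket{p_k}\!\bra{p_k}=\frac{i_k}{N}I_{j_k}$ for all $p,k$, where $\rho_k^{(p)}$ is the reduced density of $\ket{p_k}$ on the $\C^{j_k}$ factor. (Since quasiorthogonality with the diagonal depends on how $\mathcal{A}$ is placed, I read the Lemma as asserting that such a basis exists --- equivalently that $\mathcal{A}$ admits a quasiorthogonal MASA --- precisely when (2) holds.)

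For (1)$\Rightarrow$(2) I would argue by Schmidt rank, which is the conceptual heart and is short. Fix a block $k$. Because $\{\ket{p}\}$ spans $\mathcal{H}$, it projects nontrivially onto $\mathcal{H}_k$, so there is at least one $p$ with $\ket{p_k}\ne 0$. For that $p$, $\rho_k^{(p)}=\frac{i_k}{N}I_{j_k}$ is invertible on $\C^{j_k}$, hence has rank exactly $j_k$. But $\rho_k^{(p)}$ is the partial trace of the rank-one operator $\ket{p_k}\!\bra{p_k}$ on $\C^{i_k}\otimes\C^{j_k}$, so its rank equals the Schmidt rank of $\ket{p_k}$, which is at most $\min(i_k,j_k)$. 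Therefore $j_k\le\min(i_k,j_k)$, forcing $j_k\le i_k$.

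For the converse (2)$\Rightarrow$(1) I would construct the tracial basis. The single-block case is explicit: when $i_k\ge j_k$, fixing orthonormal bases $\{\ket{e_a}\}$ of $\C^{i_k}$ and $\{\ket{f_b}\}$ of $\C^{j_k}$ and setting $\ket{v_{a,c}}=\tfrac{1}{\sqrt{j_k}}\sum_{b=0}^{j_k-1}\omega^{bc}\ket{e_{a+b\bmod i_k}}\otimes\ket{f_b}$ with $\omega=e^{2\pi i/j_k}$ gives, as $a$ and $c$ range, an orthonormal basis of $\mathcal{H}_k$ every member of which has reduced density $\tfrac{1}{j_k}I_{j_k}$; both orthonormality and flatness use $j_k\le i_k$ so that the shifted vectors $\ket{e_{a+b}}$ remain distinct. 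The remaining, genuinely global, issue is that each $\ket{p}$ must carry the fixed weight $w_k=\tfrac{i_kj_k}{N}$ in every block at once (and these weights sum to $1$), so the blocks cannot be orthogonalized independently.

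The step I expect to be the main obstacle is exactly this cross-block assembly, because the set of flat-marginal vectors is not a linear subspace --- superposing tracial vectors destroys flatness --- so one cannot simply orthogonalize per block and weave with a Hadamard or DFT matrix. I would handle it by staying inside the orbit of a single tracial vector $\ket{\psi}=\sum_k\ket{\psi_k}$ under the unitary group of the commutant $\mathcal{A}'=\bigoplus_k M_{i_k}\otimes I_{j_k}$: since $u\in\mathcal{A}'$ commutes with every $a\in\mathcal{A}$, each $u\ket{\psi}$ is again tracial, and the task reduces to choosing unitaries $u^{(p)}=\bigoplus_k(u_k^{(p)}\otimes I)$ so that the weighted overlaps $\sum_k\tfrac{w_k}{j_k}\mathrm{tr}\big(Q_k(u_k^{(p)})^*u_k^{(q)}Q_k\big)$ vanish for $p\ne q$, where $Q_k$ is the rank-$j_k$ support projection of the $\C^{i_k}$-marginal of $\ket{\psi_k}$. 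Producing such a simultaneous family is the real content of the converse; it is the standard existence theorem for MASAs quasiorthogonal (complementary) to a given subalgebra, and I would either invoke that theory or build the family from a commuting pair of clock/shift operators adapted to the block sizes.
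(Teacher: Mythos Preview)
The paper does not prove this lemma; it is quoted verbatim from \cite{pereira2} and used as a black box. So there is no in-paper argument to compare your proposal against, and your attempt goes well beyond what the paper itself supplies.

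On the merits: your reading of the statement is the right one. As literally written, condition (1) depends on the particular unitary placement of $\mathcal{A}$ while condition (2) depends only on its equivalence class, so the equivalence must be read as ``some unitary conjugate of $\mathcal{A}$ is quasiorthogonal to $\Delta_N$''---and this is exactly how the paper applies it in the sentence following the lemma.

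Your (1)$\Rightarrow$(2) argument is clean and correct. Translating quasiorthogonality into $\rho_k^{(p)}=\tfrac{i_k}{N}I_{j_k}$ and then bounding the rank of this partial trace by the Schmidt rank $\min(i_k,j_k)$ gives $j_k\le i_k$ immediately. Note that the tracial condition in fact forces $\|\ket{p_k}\|^2=\tfrac{i_kj_k}{N}>0$ for \emph{every} $p$, so you don't even need to argue separately that some $\ket{p_k}$ is nonzero.

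The genuine gap is in (2)$\Rightarrow$(1), and you identify it honestly. Your single-block shift/Fourier construction is correct and does use $j_k\le i_k$ in exactly the way you say. But the cross-block assembly---producing $N$ mutually orthogonal vectors, each carrying weight $\tfrac{i_kj_k}{N}$ and a maximally mixed $\C^{j_k}$-marginal in every block simultaneously---is the nontrivial content, and you stop short of exhibiting the family. The $\mathcal{A}'$-orbit idea is a sound framework, but ``invoke that theory'' is circular here: the existence of a MASA quasiorthogonal to a given $\mathcal{A}$ with $i_k\ge j_k$ is precisely the assertion under discussion. So as a self-contained proof your proposal is incomplete in the converse direction---though, to be fair, no more incomplete than the paper, which simply cites the result.
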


As $k$-qubits can be encoded into the unital algebra $I_{2^{n-k}}\otimes M_{2^k}(\C)$, the conditional expectation onto $\Delta_{2^n}$ can privatize $k$ qubits if and only if $n-k\geq k$. In other words, the conditional expectation onto $\Delta_{2^n}$ can privatize at most $\floor{\frac{n}{2}}$ qubits.

One explicit construction of such a subalgebra is to make use of the private subalgebra from our motivating example. For $1\leq i \leq \floor{ \frac{n}{2}}$ let $\widehat{X_i} = \bigotimes_{k=1}^n \sigma_k$, where
\[
\sigma_k = \left\{ \begin{array}{cl} X & \mbox{if}\, k = 2i \\
I & \mbox{otherwise} \end{array}\right.
\]
and let $\widehat{Y_i} = \bigotimes_{k=1}^n \sigma_k$, where
\[
\sigma_k = \left\{ \begin{array}{cl} Y & \mbox{if}\, k = 2i-1,2i \\
I & \mbox{otherwise} \end{array}\right.
\]
Then $\mathcal{B}=\Alg\{\widehat{X_i},\widehat{Y_j}: 1 \leq i,j \leq \floor{\frac{n}{2}}\}$, where $\Alg\{S\}$ is the unital algebra generated by operators in $S$, is quasiorthogonal to $\Delta_{2^n}$, precisely because $\widehat{X_i},\widehat{Y_i}$ and all their products are never diagonal, unless they are the identity. This algebra encodes $\floor{\frac{n}{2}}$ qubits in the obvious way, where $\widehat{X_i}$, and $\widehat{Y_i}$ act as the Pauli matrices $X_i$, $Y_i$ on $\floor{\frac{n}{2}}$ qubits with an $X$ or $Y$ in the $i^{th}$ tensor spot respectively.


We may use this explicit construction to prove the following in the general maximal Abelian case.

\begin{thm}\label{privsize} Let $\Phi$ be a a completely positive trace-preserving map on $M_{2^n}(\C)$ whose Kraus operators are equally weighted elements of a maximal Abelian subgroup $G\leq \mathcal{P}_n$. Then $\Phi$ can privatize $\floor{\frac{n}{2}}$ qubits.
\end{thm}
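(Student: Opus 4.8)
The plan is to reduce the general maximal-Abelian case to the diagonal case already handled by the explicit $\widehat{X_i},\widehat{Y_j}$ construction, via a single unitary conjugation. First I would identify $\Phi$ as a conditional expectation. Writing $\mathcal{A}=\Alg(G)$, the preceding theorem gives $\dim\mathcal{A}=2^n$, and since $\mathcal{A}$ is Abelian we have $\mathcal{A}\subseteq\mathcal{A}^\prime$; combined with Corollary~\ref{cor:aaprime}, which forces $|\mathcal{A}^\prime|=4^n/|\mathcal{A}|=2^n=|\mathcal{A}|$, this yields $\mathcal{A}^\prime=\mathcal{A}$. Hence $\mathcal{A}$ is its own commutant, exactly as in the diagonal case, so the normalized elements of $G$ form an orthonormal basis of $\mathcal{A}^\prime$ in the left-regular inner product. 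These are precisely the Kraus operators of the trace-preserving conditional expectation $\Phi_{\mathcal{A}}$, so (up to the partial-isometry freedom in choosing Kraus operators, which does not change the map) we have $\Phi=\Phi_{\mathcal{A}}$.

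Next I would pass to the diagonal model. Because $\mathcal{A}$ is a maximal Abelian subalgebra of $M_{2^n}(\C)$ of full dimension $2^n$, its decomposition in the form~(\ref{algunitequiv}) has all blocks of size $1\times 1$, so $\mathcal{A}$ is unitarily equivalent to $\Delta_{2^n}$: there is a unitary $U$ with $\mathcal{A}=U\Delta_{2^n}U^\dagger$. Let $\mathcal{B}=\Alg\{\widehat{X_i},\widehat{Y_j}:1\le i,j\le\floor{n/2}\}$ be the explicit algebra constructed above, which encodes $\floor{n/2}$ qubits and is quasiorthogonal to $\Delta_{2^n}$, and set $\mathcal{C}=U\mathcal{B}U^\dagger$. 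Conjugation by $U$ is a trace-preserving $*$-isomorphism, so $\mathcal{C}\cong\mathcal{B}$ still encodes $\floor{n/2}$ qubits. Moreover quasiorthogonality is invariant under simultaneous conjugation: for $a\in\mathcal{A}$, $c\in\mathcal{C}$ we may write $a=UdU^\dagger$, $c=UbU^\dagger$ with $d\in\Delta_{2^n}$, $b\in\mathcal{B}$, and then $tr(ac)=tr(db)$ while $tr(a)=tr(d)$ and $tr(c)=tr(b)$, so condition~(2) of quasiorthogonality transfers verbatim from the pair $(\Delta_{2^n},\mathcal{B})$ to the pair $(\mathcal{A},\mathcal{C})$. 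Hence $\mathcal{A}$ and $\mathcal{C}$ are quasiorthogonal.

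Finally I would invoke quasiorthogonality condition~(3): since $\mathcal{A}$ and $\mathcal{C}$ are quasiorthogonal, $\Phi_{\mathcal{A}}(c)=\frac{tr(c)}{2^n}I$ for every $c\in\mathcal{C}$. Thus $\Phi=\Phi_{\mathcal{A}}$ collapses the $\floor{n/2}$-qubit algebra $\mathcal{C}$ to a single fixed state, which is exactly the statement that $\Phi$ privatizes $\floor{n/2}$ qubits.

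The step I expect to be the main obstacle is the reduction of the second paragraph: certifying that $\mathcal{A}$ is genuinely unitarily equivalent to the \emph{diagonal} algebra (rather than some Abelian algebra carrying nontrivial multiplicity) and, more delicately, that the privatized subsystem survives the conjugation as a bona fide $\floor{n/2}$-qubit encoding. The first point rests on the dimension count $\dim\mathcal{A}=2^n$ forcing all blocks in~(\ref{algunitequiv}) to be one-dimensional; the second follows because a $*$-isomorphism preserves the tensor structure underlying the notion of encoding. An alternative, more group-theoretic route would keep everything inside the Pauli picture: a maximal Abelian $G$ corresponds to a maximal all-ones submatrix of $H^{\otimes n}$, i.e.\ a maximal isotropic subspace for the symplectic form recorded by $\chi$, and since the Clifford group realizes the symplectic group acting transitively on such subspaces, one may instead take $U$ to be a Clifford unitary with $UGU^\dagger=\Delta$, so that $\mathcal{C}$ is again generated by Pauli operators.
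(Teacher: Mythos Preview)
Your proposal is correct and follows essentially the same route as the paper: identify $\Phi$ as the conditional expectation onto $\mathcal{A}=\Alg(G)$ via $\mathcal{A}'=\mathcal{A}$, diagonalize $\mathcal{A}$ to $\Delta_{2^n}$ by a unitary $U$, and transport the explicit $\widehat{X}_i,\widehat{Y}_j$ algebra back through $U$. You are more explicit than the paper in justifying $\mathcal{A}'=\mathcal{A}$ via the dimension count of Corollary~\ref{cor:aaprime} and in verifying that quasiorthogonality survives unitary conjugation, and your closing Clifford-group remark is a nice alternative for producing $U$ within the Pauli picture, but none of this departs from the paper's argument.
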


\begin{proof} Let $\mathcal{A} =\Alg\{G\}$ be the algebra generated by $G$. Then $\mathcal{A}$ is a maximal Abelian subalgebra of $M_N(\C)$. Thus $\mathcal{A}^{\prime}=\mathcal{A}$. The elements of $G$ are clearly a basis for $\mathcal{A}$, and hence $\Phi$ is the trace-preserving conditional expectation onto $\mathcal{A}^{\prime}=\mathcal{A}$. Moreover, since $\mathcal{A}$ is maximal Abelian, by simultaneously diagonalizing, there is a change of basis such that $U\mathcal{A}U^*=\Delta_{2^n}$.
Denote by $\mathcal{B}$ the subalgebra generated by $\{\widehat{X}_i,\widehat{Y}_j\}$ where $\widehat{X}_i$ and $\widehat{Y}_j$ are as above. Then $U^*\mathcal{B}U$ is privatized by $\Phi$; and following the discussion above, we see that $\mathcal{B}$ is sufficient to encode $\floor{\frac{n}{2}}$ qubits.
\end{proof}

We now examine the case that $\mathcal{A}$ is a non-maximal Abelian algebra generated by Pauli operators. Corollary \ref{cor:aaprime} says that for any subalgebra $\mathcal{A}$ generated by elements of $\mathcal{P}_n$, $\dim(\mathcal{A})\dim(\mathcal{A}^{\prime})=4^n = \dim(\mathcal{H})^2$. The algebras for which this equality holds can be characterized as follows from \cite{pereira2}.

\begin{lem}\label{homogbalalg} Let $\mathcal{A}$ be a unital subalgebra of $M_n(\C)$, then the following are equivalent:
\begin{enumerate}
\item $\dim(\mathcal{A})\dim(\mathcal{A^{\prime}})=N^2$
\item $\mathcal{A} = U(\bigoplus_{i=1}^m I_{sk_i}\otimes M_{rk_i})U^*$ for some unitary matrix $U$
\end{enumerate}
\end{lem}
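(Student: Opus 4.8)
The plan is to reduce the dimensional identity to the equality case of the Cauchy--Schwarz inequality applied to the multiplicity data furnished by the structure theorem for finite-dimensional $C^*$-algebras. First I would invoke Eq.~(\ref{algunitequiv}) to fix a unitary $U$ with $U^*\mathcal{A}U = \bigoplus_{i=1}^m I_{k_i}\otimes M_{q_i}(\C)$ and $\sum_i k_i q_i = N$. Since the dimension of an algebra, the operation of passing to the commutant, and the displayed normal form are all unitarily invariant, it suffices to work with this block form directly and then conjugate back by $U$.

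Next I would record the three relevant dimensions. The algebra itself contributes one matrix factor per block, so $\dim(\mathcal{A}) = \sum_i q_i^2$; the commutant of $\bigoplus_i I_{k_i}\otimes M_{q_i}(\C)$ is $\bigoplus_i M_{k_i}(\C)\otimes I_{q_i}$ (the block decomposition is respected because $\mathcal{A}$ contains the central projection onto each summand), whence $\dim(\mathcal{A}') = \sum_i k_i^2$; and $N = \sum_i k_i q_i$. The hypothesis $\dim(\mathcal{A})\dim(\mathcal{A}')=N^2$ then reads
\[
\Big(\sum_i q_i^2\Big)\Big(\sum_i k_i^2\Big)=\Big(\sum_i k_i q_i\Big)^2,
\]
which is exactly the equality case of Cauchy--Schwarz for the vectors $(k_1,\dots,k_m)$ and $(q_1,\dots,q_m)$ in $\R^m$.

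Then I would extract the algebraic normal form from the equality condition. Cauchy--Schwarz is an equality precisely when $(k_i)$ and $(q_i)$ are proportional; as all entries are positive integers there is a positive rational $\lambda = r/s$, in lowest terms, with $q_i = \lambda k_i$ for every $i$. Clearing denominators gives $s\, q_i = r\, k_i$, and $\gcd(r,s)=1$ forces $s\mid k_i$, so writing $k_i = s\,\ell_i$ yields $q_i = r\,\ell_i$. Conjugating back, $\mathcal{A} = U\big(\bigoplus_i I_{s\ell_i}\otimes M_{r\ell_i}(\C)\big)U^*$, which is the asserted form after renaming $\ell_i$ to $k_i$. The reverse implication is immediate: for an algebra presented as $\bigoplus_i I_{sk_i}\otimes M_{rk_i}(\C)$ the two multiplicity vectors are manifestly proportional with ratio $r/s$, so Cauchy--Schwarz holds with equality and the product of dimensions equals $N^2$.

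The computations are all routine once the reduction is in place, so the only real content is recognizing that the dimensional identity is nothing but a Cauchy--Schwarz equality. The mildly delicate point, and the step I would flag as the one requiring care, is the number-theoretic passage from proportionality of the integer multiplicity vectors to a single common integer parameter $\ell_i$ together with fixed $r,s$: it is this integrality argument that produces the uniform $I_{sk_i}\otimes M_{rk_i}$ shape, rather than merely a constant ratio whose presentation could vary from block to block.
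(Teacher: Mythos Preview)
Your argument is correct. The reduction to the structure theorem, the identification of $\dim(\mathcal{A})=\sum_i q_i^2$, $\dim(\mathcal{A}')=\sum_i k_i^2$, $N=\sum_i k_i q_i$, and the recognition of the dimensional identity as the equality case of Cauchy--Schwarz are all sound, and the integrality step extracting common $r,s$ from the proportional integer vectors is handled properly.

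As for comparison with the paper: there is nothing to compare against. The paper does not prove this lemma at all; it simply quotes the result from \cite{pereira2} and uses it as a black box. Your write-up therefore supplies a self-contained proof where the paper offers only a citation. The Cauchy--Schwarz approach you take is in fact the natural one and is presumably what underlies the cited reference as well, so this is less an alternative route than a reconstruction of the expected argument.
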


\begin{lem}\label{abelalgebras} Every commutative subalgebra $\mathcal{A}\subseteq M_{N}(\C)$ generated by Pauli operators is unitarily equivalent to an algebra of the form $I_{2^{n-k}}\otimes \Alg\{Z_i \}_{i=1}^k$ for some $k\leq n$.
\end{lem}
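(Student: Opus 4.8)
The plan is to combine the dimension identity for Pauli-generated algebras (Corollary~\ref{cor:aaprime}) with the homogeneity characterization (Lemma~\ref{homogbalalg}), and then to exploit commutativity to collapse the resulting structure down to the desired tensor form.

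First I would fix a clean generating set. Since $\mathcal{A}$ is generated by Pauli operators, I may take the generators to be standard (Hermitian) tensor-product Pauli operators; because $\mathcal{A}$ is commutative these pairwise commute, and their products are again $\pm$(standard Pauli tensors). Passing to the central quotient, the standard Pauli tensors lying in $\mathcal{A}$ form an Abelian subgroup $K\leq P_n$; as $P_n\simeq V^n$ is an elementary Abelian $2$-group, $|K| = 2^k$ for some $k$, and by the preceding theorem $k\leq n$. Distinct standard Pauli tensors are orthogonal in the Hilbert--Schmidt inner product, hence linearly independent, and they span $\mathcal{A}$; therefore $\dim \mathcal{A} = |K| = 2^k$.

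Next I would invoke the structure theory. By Corollary~\ref{cor:aaprime} we have $\dim(\mathcal{A})\dim(\mathcal{A}') = 4^n = N^2$, so $\mathcal{A}$ satisfies the hypothesis of Lemma~\ref{homogbalalg} and is unitarily equivalent to a homogeneous algebra $\bigoplus_{i=1}^m I_{s k_i}\otimes M_{r k_i}$. Here commutativity is decisive: a summand $M_{r k_i}$ is commutative only when $r k_i = 1$, which forces $r = 1$ and every $k_i = 1$. Hence each block degenerates to $I_s$, all blocks share the common size $s$, and $\mathcal{A}$ is unitarily equivalent to $\bigoplus_{i=1}^{m}\C\, I_s$, the algebra of block-scalar matrices with $m$ blocks each of size $s$. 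Matching dimensions gives $ms = N = 2^n$ and $m = \dim\mathcal{A} = 2^k$, so $s = 2^{n-k}$.

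Finally I would identify this algebra with the target. The block-scalar algebra $\bigoplus_{i=1}^{2^k}\C\, I_{2^{n-k}}$ is exactly $\Delta_{2^k}\otimes I_{2^{n-k}}$, and conjugating by the tensor-flip unitary yields $I_{2^{n-k}}\otimes \Delta_{2^k}$; since $\Delta_{2^k} = \Alg\{Z_i\}_{i=1}^k$, this is the claimed form. The step I expect to be the main obstacle is precisely the appeal to homogeneity: a general commutative subalgebra is only $\bigoplus_i \C\, I_{m_i}$ with possibly \emph{unequal} multiplicities $m_i$, which is not of the required tensor shape, so the whole statement hinges on the Pauli-generated hypothesis forcing equal block sizes through Lemma~\ref{homogbalalg}. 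As an independent check (and an alternative route if one prefers to bypass the homogeneity lemma), one can simultaneously diagonalize the commuting Hermitian generators and compute, using the tracelessness of nonidentity Pauli tensors, that each of the $2^k$ joint $\pm1$-eigenspaces has dimension $\tfrac{1}{2^k}\Tr(I) = 2^{n-k}$; grouping the diagonalizing basis by eigencharacter then realizes the tensor splitting directly, while a Hadamard/character-table nondegeneracy argument shows the diagonalized generators generate all of $\Delta_{2^k}$.
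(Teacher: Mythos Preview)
Your proposal is correct and follows essentially the same route as the paper: invoke Corollary~\ref{cor:aaprime} so that Lemma~\ref{homogbalalg} applies, use commutativity to force $rk_i=1$ (hence $r=k_i=1$ and a common block size $s$), and identify the resulting block-scalar algebra with $I_{2^{n-k}}\otimes\Delta_{2^k}$. You are simply more explicit than the paper (e.g., computing $\dim\mathcal{A}=2^k$ up front and noting the tensor-flip), and your alternative eigenspace-counting argument is a nice independent check not present in the paper.
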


\begin{proof}
By Lemma~\ref{homogbalalg}, any commutative subalgebra $\mathcal{A}$ generated by commuting Pauli operators satisfies $rk_i =1$ for all $i$, and hence for some $s$,  decomposes as
\[
\mathcal{A} = U\big(\bigoplus_{i=1}^m I_s \otimes \C\big)U^* = U(I_{2^{n-k}}\otimes \Delta_{2^{k}})U^*,
\]
for some $k$. Finally, as $\{Z_i\}_{i=1}^k$ generate $\Delta_{2^k}$, we have that $I_{2^{n-k}}\otimes\Alg\{ Z_i\}_{i=1}^k = I_{2^{n-k}}\otimes\Delta_{2^k}$.
\end{proof}

\begin{thm}
Let $\Phi:M_N(\C)\rightarrow M_N(\C)$ be a quantum channel whose Kraus operators $\{K_i\}_{i=1}^{2^k}$ are equally weighted elements of an Abelian subgroup of $\mathcal{P}_n$. Then there is a $\frac{k}{2}$-qubit algebra privatized by $\Phi$.
\end{thm}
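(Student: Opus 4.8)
The plan is to reduce the general non-maximal Abelian case to the maximal Abelian case already handled in Theorem~\ref{privsize}, using Lemma~\ref{abelalgebras} as the structural bridge. First I would set $\mathcal{A} = \Alg\{K_i\}$, the algebra generated by the Kraus operators. Since the $K_i$ are equally weighted elements of an Abelian subgroup of $\mathcal{P}_n$ of size $2^k$, the algebra $\mathcal{A}$ is a commutative subalgebra generated by Pauli operators, so by Lemma~\ref{abelalgebras} there is a unitary $U$ with $U\mathcal{A}U^* = I_{2^{n-k}}\otimes \Delta_{2^k}$. The commutant is then $U\mathcal{A}^{\prime}U^* = M_{2^{n-k}}(\C)\otimes \Delta_{2^k}$, and since $\mathcal{A}$ is Abelian with $\dim(\mathcal{A})=2^k$, the channel $\Phi$ is exactly the trace-preserving conditional expectation $\Phi_{\mathcal{A}}$, whose Kraus operators are an orthonormal basis for $\mathcal{A}^{\prime}$; indeed the elements of the group, suitably weighted, serve as Kraus operators.

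Next I would observe that privatizing a subalgebra by $\Phi_{\mathcal{A}}$ amounts, by condition~(3) of the quasiorthogonality definition, to finding a subalgebra $\mathcal{B}$ quasiorthogonal to $\mathcal{A}$. After conjugating by $U$, the problem becomes: find a subalgebra quasiorthogonal to $I_{2^{n-k}}\otimes \Delta_{2^k}$ that is large enough to encode $\frac{k}{2}$ qubits. The key point is that the first tensor factor $M_{2^{n-k}}(\C)$ is completely free --- it lies entirely in the commutant and contributes nothing diagonal --- while on the second factor we face precisely the problem solved for $\Delta_{2^k}$, namely privatizing qubits against the diagonal algebra on $k$ qubits. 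By the explicit construction preceding Theorem~\ref{privsize} (the $\widehat{X}_i,\widehat{Y}_i$ operators), there is a subalgebra $\mathcal{B}_0$ acting on the $2^k$-dimensional factor, quasiorthogonal to $\Delta_{2^k}$, encoding $\floor{\frac{k}{2}}$ qubits. Taking $\mathcal{B} = U^*(I_{2^{n-k}}\otimes \mathcal{B}_0)U$ then yields a subalgebra quasiorthogonal to $\mathcal{A}$ and hence privatized by $\Phi$.

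The main obstacle I anticipate is bookkeeping the discrepancy between $\floor{\frac{k}{2}}$ and the stated $\frac{k}{2}$: the construction on the $k$-qubit factor naturally gives $\floor{\frac{k}{2}}$ qubits, so I would need to either interpret the claim as $\floor{\frac{k}{2}}$ or exploit the extra room afforded by the free factor $M_{2^{n-k}}(\C)$ in the commutant to absorb the rounding. Concretely, quasiorthogonality to $I_{2^{n-k}}\otimes \Delta_{2^k}$ is a weaker constraint than quasiorthogonality to a maximal Abelian algebra, because the multiplicity $I_{2^{n-k}}$ relaxes the dimension-counting inequality $i_k\geq j_k$ of the earlier lemma; this slack is exactly what should let the encoded subsystem reach the asserted size. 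The remaining steps --- verifying that the chosen $\mathcal{B}$ is genuinely a $*$-subalgebra isomorphic to a Pauli algebra on the requisite number of qubits, and that the quasiorthogonality translates back through $U$ to give $\Phi(\sigma_A\otimes\sigma_B)=\rho_0$ in the sense of Eq.~(\ref{private_definition}) --- are routine once the structural decomposition is in hand.
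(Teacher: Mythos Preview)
Your approach matches the paper's: apply Lemma~\ref{abelalgebras} to obtain $U\mathcal{A}U^*=I_{2^{n-k}}\otimes\Delta_{2^k}$, invoke Theorem~\ref{privsize} on the second factor to get a $\floor{k/2}$-qubit algebra $\mathcal{B}_0$ quasiorthogonal to $\Delta_{2^k}$, and take $\mathcal{B}=U^*(I_{2^{n-k}}\otimes\mathcal{B}_0)U$.

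One correction is needed. In the non-maximal case $k<n$ the channel $\Phi$ is not $\Phi_{\mathcal{A}}$ but $\Phi_{\mathcal{A}'}$: the $2^k$ Kraus operators span $\mathcal{A}$, so they form an orthonormal basis for $(\mathcal{A}')'=\mathcal{A}$, which makes $\Phi$ the conditional expectation onto $\mathcal{A}'$ (equivalently, after conjugation $\Phi$ acts as $\mathrm{id}_{2^{n-k}}\otimes\Phi_{\Delta_{2^k}}$, with range $M_{2^{n-k}}(\C)\otimes\Delta_{2^k}$). Your own sentence asserting that the Kraus operators are a basis for $\mathcal{A}'$ and are simultaneously the group elements is where the inconsistency shows, since $\dim\mathcal{A}'=2^{2n-k}\neq 2^k$ when $k<n$. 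The quasiorthogonality you actually need is therefore to $\mathcal{A}'$, not $\mathcal{A}$; fortunately $I_{2^{n-k}}\otimes\mathcal{B}_0$ is quasiorthogonal to both, so the construction survives. As for $\floor{k/2}$ versus $k/2$: the paper simply writes $k/2$ where $\floor{k/2}$ is meant. Your proposed slack argument points the wrong way once the correction above is made, since $\mathcal{A}'\supseteq\Delta_{2^n}$ and quasiorthogonality to a larger algebra is a stronger constraint, not a weaker one.
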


\begin{proof}
By Lemma \ref{abelalgebras}, the algebra generated by the Kraus operators must be unitarily equivalent to $I_{2^{n-k}}\otimes \Delta_{2^k}$; hence after diagonalizing and restricting to the subalgebra spanned by only the last $k$ qubits, we obtain the maximal Abelian subalgebra on $k$ qubits; Theorem \ref{privsize} tells us how to privatize $\frac{k}{2}$ qubits in this scheme. Tensoring any private algebra encoding $\frac{k}{2}$ qubits for the $2^k\times 2^k$ maximal Abelian case with a $2^{n-k}\times 2^{n-k}$ identity yields an algebra encoding $\frac{k}{2}$ qubits that is private for $\Phi$.
\end{proof}

\begin{rmk}
In a sense the results in this Section can be thought of as the subsystem analog of the results from \cite{ambainis}, where it was shown that $k$~unitaries could be used to privatize a~$\frac{k}{2}$-qubit subspace. Here, we have shown that we can privatize a $\frac{k}{2}$-qubit subsystem algebra using $k$~elements from an Abelian subgroup of~$\mathcal{P}_n$.
\end{rmk}

\section{Conclusion}

In this work, we demonstrate the underlying mathematical principles behind private quantum subsystems. Namely, by using the theory of quasiorthogonal algebras and developing the set of tools therein, we can make definite statements on the dimension of the subsystems that can be privatized by commuting Kraus operators. We show that by taking elements of a maximal Abelian subgroup~$G$ of the $n$-qubit Pauli group~$\mathcal{P}_n$, the elements of such a group can privatize $\floor{\frac{n}{2}}$~qubits when used as equally weighted Kraus operators of the channel, and we give explicit constructions for such private subsystems. Moreover, we show that this result can be generalized to fewer qubits when the size of the Abelian subgroup is not taken to be maximal.

\vspace{0.1in}

{\noindent}{\it Acknowledgements.} J.L. was supported by an AIMS-University of Guelph Postdoctoral Fellowship. T.J.-O. was supported by the Vanier--Banting Secretariat and NSERC through the Vanier Canada Graduate Scholarship. D.W.K was supported by NSERC and a University Research Chair at Guelph. R.L. was supported by NSERC, CIFAR, and the Canadian and Ontario governments. R.P. was supported by NSERC.

\bibliographystyle{plain}
\bibliography{paulirefs}

\newpage 
\appendix
\section{Proof of Privatized Qutrit Subsystem}
\label{app:qutritProof}

In this Appendix, we prove the result stated in Example~\ref{ex:qutrit}. Namely, for the generalized Pauli $X$~and~$Z$ operators on qutrits, the channel
\begin{align*}
\Phi (\rho) = \dfrac{1}{9} \sum_{i,j=0}^2 (X^{2i} Z^i \otimes X^j Z^j) \rho (X^{2i} Z^i \otimes X^j Z^j)^\dagger
\end{align*}
has no private subspace but can privatize the subalgebra~$\mathcal{A}$ generated by~$X^2 \otimes X,\ XZ^2 \otimes Z$, which is isomorphic to the one-qutrit subalgebra~$I \otimes X, \ I \otimes Z$.

\begin{proof}
Explicitly, $X^3=Z^3=I$, and $XZ=\omega ZX$, so the group generated by $X,Z$ has a bicharacter given by $\left[g,h\right] = \chi(g,h)I$, yielding the following character matrix:
\begin{equation}
F = \bordermatrix{  & I& X      & X^2  & Z      & XZ      & X^2Z   & Z^2    & XZ^2   & X^2Z^2 \cr
              I & 1& 1      & 1    &      1 & 1       &  1     &  1     &  1     &  1     \cr
              X & 1& 1      & 1    & \omega &\omega   &\omega  &\omega^2&\omega^2&\omega^2\cr
             X^2& 1& 1      & 1    &\omega^2&\omega^2 &\omega^2&\omega  &\omega  &\omega  \cr
               Z& 1&\omega^2&\omega&1       &\omega^2 &\omega  &  1     &\omega^2&\omega  \cr
              XZ& 1&\omega^2&\omega&\omega  &  1      &\omega^2&\omega^2&\omega  &  1     \cr
            X^2Z& 1&\omega^2&\omega&\omega^2&\omega   &  1     &\omega  &  1     &\omega^2\cr
             Z^2& 1&\omega  &\omega^2&1     &\omega   &\omega^2&  1     &\omega  &\omega^2\cr
            XZ^2& 1&\omega  &\omega^2&\omega&\omega^2 &  1     &\omega^2&  1     &\omega  \cr
          X^2Z^2& 1&\omega  &\omega^2&\omega^2& 1     &\omega  &\omega  &\omega^2&  1        }
\end{equation}

 In two qutrit space, the matrix of commutation relations is $F\otimes F$; any all-$1$s submatrix of $F\otimes F$ yields an Abelian subalgebra. Our choice corresponds to the tensor product of the $\{1,6,8\}$ submatrix of $F$ with itself union the tensor product of the $\{1,5,9\}$ submatrix of $F$ with itself. 

To find a quasiorthogonal algebra to this Abelian algebra, we take a quotient by the subalgebra to obtain the following equivalence classes:
\begin{equation*} \begin{array}{c|c|c|c|c|c|c|c}
I\otimes I        & I\otimes X        & I \otimes X^2   &X \otimes I         &X^2 \otimes I          \\ \hline
I\otimes XZ       & I\otimes X^2Z     & I\otimes Z      &X\otimes XZ         &X^2\otimes XZ           \\
I\otimes X^2Z^2   & I\otimes Z^2      &I\otimes XZ^2    &X\otimes X^2Z^2     &X^2\otimes X^2Z^2  \\
X^2Z\otimes I     & X^2Z\otimes X     &X^2Z\otimes X^2  &Z\otimes I          &XZ\otimes I           \\
X^2Z\otimes XZ    & X^2Z\otimes X^2Z  &X^2Z\otimes Z    &Z\otimes XZ         &XZ\otimes XZ       \\
X^2Z\otimes X^2Z^2& X^2Z\otimes Z^2   &X^2Z\otimes XZ^2 &Z\otimes X^2Z^2     &XZ\otimes X^2Z^2   \\
XZ^2\otimes I     &XZ^2\otimes X      &XZ^2\otimes X^2  &X^2Z^2\otimes I     &Z^2\otimes I       \\
XZ^2\otimes XZ    &XZ^2\otimes X^2Z   &XZ^2\otimes Z    &X^2Z^2\otimes XZ    &Z^2\otimes XZ       \\
XZ^2\otimes X^2Z^2&XZ^2\otimes Z^2    &XZ^2\otimes XZ^2 &X^2Z^2\otimes X^2Z^2&Z^2\otimes X^2Z^2  \end{array}
\end{equation*}
\begin{equation*}
\begin{array}{c|c|c|c}  X\otimes X & X\otimes X^2        & X^2\otimes X    & X^2\otimes X^2 \\ \hline
                      X\otimes X^2Z & X\otimes Z          & X^2\otimes X^2Z & X^2\otimes Z \\
                      X\otimes Z^2 & X\otimes XZ^2       &X^2\otimes Z^2   & X^2\otimes XZ^2\\
                      Z\otimes X & Z\otimes X^2        &XZ\otimes X      &XZ\otimes X^2 \\
                      Z\otimes X^2Z & Z\otimes Z          &XZ\otimes X^2Z   &XZ\otimes Z \\
                      Z\otimes Z^2 & Z \otimes XZ^2      &XZ\otimes Z^2    &XZ\otimes XZ^2\\
                      X^2Z^2\otimes X & X^2Z^2\otimes X^2   &Z^2\otimes X     &Z^2\otimes X^2\\
                      X^2Z^2\otimes X^2Z & X^2Z^2\otimes Z     &Z^2\otimes X^2Z  &Z^2\otimes Z\\
                      X^2Z^2\otimes Z^2 & X^2Z^2\otimes XZ^2  &Z^2\otimes Z^2   &Z^2\otimes XZ^2\end{array}\end{equation*}
Notice that cosets $3$, $5$, $8$, and $9$ are obtained by squaring cosets $2$, $4$, $7$ and $6$ respectively, so a choice of two generators for a quasiorthogonal subalgebra should make sure to avoid these pairs. Our choice corresponds to choosing $X^2\otimes X$ from coset $8$ and $XZ^2\otimes Z$ from coset $3$. To see that this choice of generators is isomorphic to a qutrit algebra, note that the block unitary
\begin{equation*} U = \left(\begin{array}{c|c|c} I & X^2Z^2 & XZ \\ \hline
XZ^2 & Z & X^2 \\ \hline
X^2Z & X & Z \end{array}\right)\end{equation*}
acts by on $I\otimes X$ and $I\otimes Z$ by
\begin{align*}
U(I\otimes X)U^* &= \omega^2 XZ^2\otimes Z\\
U(I\otimes Z)U^* &= X^2 \otimes X
\end{align*}
\end{proof}

\end{document}